\patchcmd{\@maketitle}{\begin{center}}{\begin{flushleft}}{}{}
\patchcmd{\@maketitle}{\begin{tabular}[t]{c}}{\begin{tabular}[t]{@{}l}}{}{}
\patchcmd{\@maketitle}{\end{center}}{\end{flushleft}}{}{}
\renewenvironment{abstract}
{\small\section*
{\bfseries\noindent{\raisebox{-.15\baselineskip}{\normalsize\abstractname}}\hrulefill} 
}
\newtheorem{theorem}{Theorem}
\newtheorem{lemma}{Lemma}
\begin{document}
 \pagenumbering{gobble}

\title{Memory Optimal Dispersion by Anonymous Mobile Robots}
    \author{
    Archak Das\\
    \small \emph{Department of Mathematics, Jadavpur University, India}\\
    \small \emph{archakdas.math.rs@jadavpuruniversity.in}\\\\
    Kaustav Bose\\
    \small \emph{Department of Mathematics, Jadavpur University, India}\\
    \small \emph{kaustavbose.rs@jadavpuruniversity.in}\\\\
    Buddhadeb Sau\\
    \small \emph{Department of Mathematics, Jadavpur University, India}\\
    \small \emph{buddhadeb.sau@jadavpuruniversity.in}\\\\
    }
    \date{}
    
  \maketitle

\begin{abstract}

Consider a team of $k \leq n$ autonomous mobile robots initially placed at a node of an arbitrary graph $G$ with $n$ nodes. The \emph{dispersion} problem asks for a distributed algorithm that allows the robots to reach a configuration in which each robot is at a distinct node of the graph. If the robots are anonymous, i.e., they do not have any unique identifiers, then the problem is not solvable by any deterministic algorithm.  However, the problem can be solved even by anonymous robots if each robot is given access to a fair coin which they can use to generate random bits. In this setting, it is known that the robots require $\Omega(\log{\Delta})$ bits of memory to achieve dispersion, where $\Delta$ is the maximum degree of $G$. On the other hand, the best known memory upper bound is $min \{\Delta, max\{\log{\Delta}, \log{D}\}\}$ ($D$ = diameter of $G$), which can be $\omega(\log{\Delta})$, depending on the values of $\Delta$ and $D$. In this paper, we close this gap by presenting an optimal algorithm requiring $O(\log{\Delta})$ bits of memory. 
\end{abstract}

\section{Introduction}

\subsection{Background and Motivation}

A considerable amount of research has been devoted in recent years to the study of distributed algorithms for autonomous multi-robot system. A multi-robot system consists of a set of autonomous mobile computational entities, called $robots$, that coordinate with each other to achieve some well defined goals, such as forming a given pattern, exploration of unknown environments etc. The robots may be operating on continuous space or graph-like environments. The most fundamental tasks in graphs are \textsc{Gathering} \cite{dieudonne2016anonymous,bose2018optimal,di2017optimal,d2014gathering,klasing2008gathering,klasing2010taking,di2020gathering,czyzowicz2012meet,ta2014deterministic,kamei2019gathering,kowalski2008meet,izumi2010mobile,miller2016time,miller2015fast} and \textsc{Exploration} \cite{dereniowski2015fast,brass2011multirobot,cohen2008label,diks2004tree,duncan2006optimal,panaite1999exploring,das2018collaborative}. A relatively new problem which has attracted a lot of interest recently is \textsc{Dispersion}, introduced by Augustine and Moses Jr. \cite{AugustineM18}. The problem asks $k \leq n$ robots, initially placed arbitrarily at the nodes of an $n$-node anonymous graph, to reposition themselves to reach a configuration in which each robot is at a distinct node of the graph. The problem has many practical applications, for example, in relocating self-driven electric cars to recharge stations where finding new recharge stations is preferable to multiple cars queuing at the same station to recharge. The problem is also interesting because of its relationship to other well-studied problems such as \textsc{Exploration}, \textsc{Scattering} and \textsc{Load Balancing} \cite{AugustineM18}.

It is easy to see that the problem cannot be solved deterministically by a set of anonymous robots. Since all robots execute the same deterministic algorithm and initially they are in the same state, the co-located robots will perform the same moves. This is true for each round and hence they will always mirror each other’s move and will never do anything different. Hence, throughout the execution of the algorithm, they will stick together and as a result, dispersion cannot be achieved. Using similar arguments, it can be shown that the robots need to have $\Omega(\log{k})$ bits of memory each in order to solve the problem by any deterministic algorithm \cite{AugustineM18}. However, it has been recently shown in \cite{MollaM19} that if we consider randomized algorithms, i.e., each robot is given access to a fair coin which can be used to generate random bits, then \textsc{Dispersion} can be solved by anonymous robots with possibly $o(\log{k})$ bits of memory. In \cite{MollaM19}, two algorithms are presented for \textsc{Dispersion} from a \emph{rooted configuration}, i.e., a configuration in which all robots are situated at the same node. The first algorithm requires each robot to have $O(max\{\log{\Delta}, \log{D}\})$ bits of memory, where $\Delta$ and $D$ are respectively the maximum degree and diameter of $G$. The second algorithm requires each robot to have $O(\Delta)$ bits of memory. In \cite{MollaM19}, it is also shown that the robots require $\Omega(\log{\Delta})$ bits of memory to achieve dispersion in this setting. Notice that while the memory requirement of the second algorithm is clearly $\omega(\log{\Delta})$, that of the first algorithm too can be $\omega(\log{\Delta})$ depending on the values of $\Delta$ and $D$. In this paper, we close this gap by presenting an asymptotically optimal algorithm that requires $O(\log{\Delta})$ bits of memory.

\subsection{Related Works}

\textsc{Dispersion} was introduced in \cite{AugustineM18} where the problem was considered in specific graph structures such as paths, rings, trees as well as arbitrary graphs.  In \cite{AugustineM18}, the authors assumed $k$ $=$ $n$, i.e., the number of robots $k$ is equal to the number of nodes $n$. They proved a memory lower bound of $\Omega(\log k)$ bits at each robot and a time lower bound of $\Omega(\log D)$ rounds for any deterministic algorithm to solve the problem in a graph of diameter $D$. They then provided deterministic algorithms using $O(\log n)$ bits of memory at each robot to solve \textsc{Dispersion} on lines, rings and trees in $O(n)$ time. For rooted trees they provided an algorithm requiring $O(\Delta + \log n)$ bits of memory and $O(D^2)$ rounds and for arbitrary graphs, they provided an algorithm,  requiring $O(n\log n)$ bits of memory and $O(m)$ rounds ($m $ is the number of edges in the graph). In \cite{icdcn/KshemkalyaniA19}, a $\Omega(k)$ time lower bound was proved for $k \leq n $. In addition, three deterministic algorithms were provided in \cite{icdcn/KshemkalyaniA19} for arbitrary graphs. The first algorithm requires $O(k$log$\Delta)$ bits of memory and $O(m)$ time, ($\Delta =$ the maximum degree of the graph), the second algorithm requires $O(D\log\Delta)$ bits of memory and $O(\Delta^D)$ time, and the third algorithm requires $O($log(max$(k,\Delta)))$ bits of memory and $O(mk)$ time. Recently, a deterministic algorithm was provided in \cite{KshemkalyaniMS19} that runs in $O(\min(m,k\Delta)\log k)$ time and uses $O(\log n)$ bits of memory at each robot. In \cite{walcom/KshemkalyaniMS20}, the problem was studied on grid graphs. The authors presented two deterministic algorithms on anonymous grid graphs that achieve simultaneously optimal bounds with respect to both time and memory complexity. For the first algorithm, the authors considered the local communication model where a robot can only communicate with other robots that are present at the same node. Their second algorithm works in global communication model where a robot can communicate with other robots present anywhere on the graph. In the local communication model, they showed that the problem can be solved in an $n$-node square grid graph in  $O($min$(k,\sqrt{n}))$ time with $O(\log k)$ bits of memory at each robot. In the global communication model, the authors showed that it can be solved in $O(\sqrt{k})$ time with $O(\log k)$ bits of memory at each robot. In \cite{KshemkalyaniMS20}, the authors extended the work in global communication model to arbitrary graphs. They gave three deterministic algorithms, two for arbitrary graphs and one for trees. For arbitrary graphs, their first algorithm is based on DFS traversal and has time complexity of $O(\min(m,k\Delta))$ and memory complexity of $\Theta(\log(\max(k,\Delta)))$. The second algorithm is based on BFS traversal and has  time complexity $O(\max(D,k)\Delta(D+\Delta))$ and memory complexity $O(\max(D, \Delta\log k))$. The third algorithm in arbitrary trees is a BFS based algorithm that has time and memory complexity $O(D\max(D,k))$ and $O(\max(D, \Delta\log k))$ respectively. In \cite{agarwalla2018deterministic}, the problem was studied on dynamic rings.  Fault-tolerant  \textsc{Dispersion} was considered for the first time in \cite{mollabyz} where the authors studied the problem on a ring in presence of Byzantine robots. In \cite{MollaM19}, randomization was used to break the $\Omega(\log k)$ memory lower bound for  deterministic algorithms. In particular, the authors considered anonymous robots that can generate random bits and gave two deterministic algorithms that achieve dispersion from rooted configurations on an arbitrary graph. The memory complexity of the algorithms are respectively $O(max\{\log{\Delta}, \log{D}\})$ and $O(\Delta)$. For arbitrary initial configurations, they gave a random walk based algorithm that requires $O(\log{\Delta})$ bits of memory, but the robots do not terminate.

\subsection{Our Results}

We study \textsc{Dispersion} from a rooted configuration on arbitrary graphs by a set of anonymous robots with random bits. In \cite{MollaM19}, two algorithms with memory complexity $O(max\{\log{\Delta}, \log{D}\})$ and $O(\Delta)$ were reported. The question of whether the problem can be solved with $O(\log\Delta)$ bits of memory at each robot was left as an open problem. In this paper, we answer this question affirmatively by presenting an algorithm with memory complexity $O(\log\Delta)$. The lower bound result presented in \cite{MollaM19} implies that the algorithm is asymptotically optimal with respect to memory complexity. 

\subsection{Organization of the Paper}
In Section \ref{sec: preli}, we describe the model and introduce notations that will be used in the paper. In Section \ref{sec: algo}, we describe the main algorithm. In Section \ref{sec: correct}, we prove the correctness of our algorithm and establish the time and memory complexity. 


\section{Technical Preliminaries}\label{sec: preli}

\paragraph{Graph.} We consider a connected undirected graph $G$ of $n$ nodes, $m$ edges, diameter $D$ and maximum degree $\Delta$. For any node $v$, its degree is denoted by $\delta(v)$ or simply $\delta$ when there is no ambiguity. The nodes are anonymous, i.e., they do not have any labels. For every edge connected to a node, the node has a corresponding port number for the edge.  For every node, the edges incident to the node are uniquely identified by port numbers in the range $[0,\delta-1 ]$. There is no relation between the two port numbers of an edge.  If $u, v$ are two adjacent nodes then port$(u,v)$ denotes the port at $u$ that corresponds to the edge between $u$ and $v$.

\paragraph{Robots.} Robots are anonymous, i.e., they do not have unique identifiers. Each robot has $O(\log{\Delta})$ bits of space or memory for computation and to store information. Each robot has a fair coin which they can use to generate random bits. Each robot can communicate with other robots present at the same node by message passing:  a robot can broadcast some message which is received by all robots present at the same node. The size of a message is no more than its memory size because it cannot generate a message whose size is greater than its memory size. Therefore, the size of a message must be $O(\log{\Delta})$. Also, when there are many robots (co-located at a node) broadcasting their messages, it is not possible for a robot to receive all of these messages due to limited memory. When there is not enough memory to receive all the messages, it receives only a subset of the messages. The view of a robot is local: the only things that a robot can `see' when it is at some node, are the edges incident to it. The robots have access to the port numbers of these edges. It cannot `see' the other robots that may be present at the same node. The only way it can detect the presence of other robots is by receiving messages that those robots may broadcast. The robots can move from one node to an adjacent node. Any number of robots are allowed to move via an edge. When a robot moves from a node $u$ to node $v$, it is aware of the port through which it enter $v$. 


\paragraph{Time Cycle.} We assume a fully synchronous system. The time progresses in rounds. Each robot knows when a current round ends and when the next round starts. Each round consists of the following.

\begin{itemize}
 \item The robots first performs a series of synchronous computations and communications. These are called \emph{subrounds}. In each subround, a robot performs some local computations and then broadcasts some messages. The messages received in the $i$th subround are read in the $(i+1)$th subround.  The local computations are based on its memory  contents (which contains the messages that it might have received in the last subround and other information that it had stored) and a random bit generated by the fair coin. 
 
 \item Then robots move through some port or remains at the current node. 
\end{itemize}


\paragraph{Problem Definition.} A team of $k$ ($\leq n$) robots are initially at the same node of the graph $G$. The \textsc{Dispersion} problem requires the robots to re-position themselves so that i) there is at most one robot at each node, and ii) all robots have terminated within a finite number of rounds.

\section{The Algorithm}\label{sec: algo}

\subsection{Local Leader Election}

Before presenting our main algorithm, we give a brief description of the \textsc{LeaderElection()} subroutine. We adopt this subroutine from \cite{MollaM19}. When $k \geq 1$ robots are co-located together at a node, \textsc{LeaderElection()} subroutine allows exactly one robot to be selected as the leader within one round. Formally, 1) if $k = 1$, the robot finds out that it is the only robot at the node, 2) if $k > 1$, after finitely many rounds (with high probability), i) exactly one robot is elected as leader, ii) all robots can detect when the process is completed. Each robot starts off as a candidate for leader. In the first subround, every robot broadcasts `start'. If a robot finds that it has received no message, it then concludes that it is the only robot at the node. Otherwise, it concludes that there are multiple robots at the node and does the following. In each subsequent subround, each candidate flips a fair coin. If heads, it broadcasts `heads', otherwise it does not broadcast anything. If a robot gets tails, and receives at least one (`heads') message, it stops being a candidate. This process is repeated until exactly one robot, say $r$, broadcasts in a given subround. In this subround, $r$ broadcasts `heads', but receives no message, while all other non-candidate robots have not broadcasted, but received exactly one message. So $r$ elects itself as the leader, and all robots detect that the process is completed. The process requires $O(1)$ bits of memory at each robot and terminates in $O(\log k)$ subrounds with high probability.

\subsection{Overview of the Algorithm}

In this subsection, we present a brief overview of the algorithm. The execution of our algorithm can be divided into three stages. In the first stage, the robots, together as a group, perform a DFS traversal in the search of empty nodes, starting from the node where they are placed together initially. We shall call this node the $root$ and denote it by $v_R$. Whenever the group reaches an empty node, they perform the \textsc{LeaderElection()} subroutine to elect a leader. The leader settles at that node, while the rest of the group continues the DFS traversal. Note that the settled robot does not terminate. This is because  when the robots that are performing the DFS return to that node, they need to detect that the node is occupied by a settled robot. Recall that a robot cannot distinguish between an empty node and a node with a terminated robot. Therefore, the active settled robot helps the travelling robots to distinguish between an occupied node and an empty node, and also provides them with other information that are required to correctly execute the DFS. The size of the travelling group decreases by one, each time the DFS traversal reaches an empty node. The first stage completes when each robot has found an empty node for itself. Let $r_L$ denote the last robot that finds an empty node, $v_L$, for itself. Although dispersion is achieved, this robot will not terminate. The other settled robots do not know that dispersion is achieved and will remain active. Therefore $r_L$ needs to revisit those nodes and ask the settled robots to terminate. First $r_L$ will return to the root $v_R$ via the \emph{rootpath} which is the unique path in the DFS tree from $v_L$ to $v_R$.  This is the second stage of the algorithm. In the third stage, $r_L$ performs a second DFS traversal and asks the active settled robots to terminate. Since the active settled robots play crucial in the DFS traversal, $r_L$ needs to be careful about the order in which it should ask the settled robots to terminate. Finally, $r_L$ terminates after it returns to $v_L$.

A pseudocode description of the algorithm is given in Algorithm \ref{algo:main}. In Table \ref{table}, we give details of the variables used by the robots. If $variable\_name$ is some variable, then we shall denote the value of the variable stored by $r$ as $r.variable\_name$.

\begin{table}
\begin{center}
    \begin{tabular}{  l | p{11cm} }
    \hline
    \textbf{Variable} & \textbf{Description} \\ \hline
    
    \emph{role} & It indicates the role that the robot is playing in the algorithm. It takes values from $\{$\texttt{explore}, \texttt{settled}, \texttt{return}, \texttt{acknowledge}, \texttt{done}$\}$. Initially, $role \leftarrow \texttt{explore}$.  \\ \hline
    
    \emph{entered} & It indicates the port through which the robot has entered the current node. Initially, $entered \leftarrow \emptyset$. For simplicity, assume that it is automatically updated when the robot entered a node.   \\ \hline
    
    \emph{received} & It indicates the message(s) received by the robot in the current subround. After the end of each subround, the messages are erased, i.e., it is reset to $\emptyset$. Initially, $received \leftarrow \emptyset$. \\ \hline
    
    \emph{direction} & It indicates the direction of movement of a robot during a DFS traversal. It takes values from $\{$\texttt{forward}, \texttt{backward}$\}$. Initially, $role \leftarrow \texttt{forward}$. \\ \hline
    
    \emph{parent} & For a settled robot on some node, it indicates the port number towards the parent of that node in the DFS tree.  Initially, $parent \leftarrow \emptyset$. \\ \hline
    
    \emph{child} & For a settled robot on some node that is on the rootpath, it indicates the port number towards the child of that node in the DFS tree that is on the rootpath.  Initially, $child \leftarrow \emptyset$. \\ \hline
    
    \emph{visited} & For a settled robot on some node, it indicates whether the node where the robot is settled has been visited by $r_L$ in the third stage.  Initially, $visited \leftarrow 0$. \\ \hline
    
    \end{tabular}
\end{center}
\caption{Description of the variables used by the robots}\label{table}
\end{table}

\begin{algorithm}[]
    \setstretch{0.01}
    \SetKwInOut{Input}{Input}
    \SetKwInOut{Output}{Output}
    \SetKwProg{Fn}{Function}{}{}
    \SetKwProg{Pr}{Procedure}{}{}
    \Pr{\textsc{Dispersion()}}{

    $r \leftarrow$ myself
    
    \uIf{$r.role =$ \texttt{settled}}{

      \uIf{I am queried}{
    
	  \textsc{Broadcast}($role = settled, parent = r.parent, child = r.child, visited = r.visited$)
	
      }

      \uElseIf{$r.received =$ ``Set child $x$''}{
      
	$r.child \leftarrow x$
      
      }
 	  \uElseIf{$r.received =$ ``Set visited = $1$''}{
      
	$r.visited \leftarrow 1$
      
      }
      \ElseIf{$r.received = ``terminate$'' }{
 	  \textsc{Terminate}()
 	  }

	}

      \uElseIf{$r.role =$ \texttt{explore}}{ 
      
	\textsc{Query()}
    
	\uIf{$r.received = \emptyset$}{
      
	  \textsc{LeaderElection()}\\
	  \uIf{I am alone}{
	    $r.role \leftarrow$ \texttt{return}\\
	    Move through $r.entered$\\
	  }
	  \uElseIf{I am elected as leader}{
	    $r.role \leftarrow$ \texttt{settled}\\
	    $r.parent \leftarrow r.entered$ 
	  }
	  \ElseIf{I am not elected as leader}{
	  
	    \uIf{$r.entered = \emptyset$}{Move via port 0}
	    \Else{
	          \If{($r.entered +1 = r.entered$) mod $\delta$}
	               {$r.direction \leftarrow$ \texttt{backward} }
	    Move via port $(r.entered + 1)$mod$\delta$
	    
	    }
	  }
	
	}
	\ElseIf{$r.received = ``role = settled, parent = x, child = \emptyset, visited = 0$''}{
      
	  \uIf{$r.direction =$ \texttt{forward}}{
	
	    $r.direction \leftarrow$ \texttt{backward}\\
	    Move via port $r.entered$
	  
	    }
	  \ElseIf{$r.direction =$ \texttt{backward}}{

	      \uIf{$(r.entered + 1)$mod$\delta = x$}{
	      
	      Move via port $(r.entered + 1)$mod$\delta$
	  
	      }
	      \Else{
	  
	      $r.direction \leftarrow$ \texttt{forward}\\
	      Move via port $(r.entered + 1)$mod$\delta$
	  
	      }
	    
	    }
	
      
	}
      
      }
      
      \uElseIf{$r.role =$ \texttt{return}}{
      
	\textsc{Query}()\\
	\If{$r.received = ``role = settled, parent = x, child = \emptyset, visited = 0$''}{
	  \uIf{$x \neq \emptyset$}{
	  
	    \textsc{Broadcast}(``Set child $r.entered$'')\\
	    Move via port $x$\\
	  
	    }
	  \Else{
	  
	    \textsc{Broadcast}(``Set child $r.entered$'')\label{code: 1}\\
	    $r.direction \leftarrow$ \texttt{forward} \\
	    $r.role \leftarrow$ \texttt{acknowledge}\\
	    $r.entered \leftarrow \emptyset$ \label{code: 2} \\
	  
	}
	
	}
      
      }
      
      \uElseIf{$r.role =$ \texttt{acknowledge}}{
      
	\textsc{Query}()\\
	\uIf{$r.received = ``role = settled, parent = x, child = y, visited = 0$''}{
	
	  \textsc{Broadcast}(``Set visited = 1'')\\
	  \uIf{$r.entered = \emptyset$}{Move via port 0}
	    \Else{
	      
	      \uIf{$r.entered = (r.entered + 1)$mod$\delta$}{
		$r.direction \leftarrow$ \texttt{backward}\\
		\textsc{Broadcast}(``terminate'')
		}
	    \ElseIf{$y = (r.entered + 1)$mod$\delta$}{
	      \textsc{Broadcast}(``terminate'')}
	    
	      Move via port $(r.entered + 1)$mod$\delta$}

	}
	\uElseIf{$r.received = ``role = settled, parent = x, child = y, visited = 1$''}{
	
	  \uIf{$r.direction =$ \texttt{forward}}{
	  
	     $r.direction \leftarrow$ \texttt{backward}\\
	     Move via port $r.entered$
	  
	  }
	  \ElseIf{$r.direction =$ \texttt{backward}}{
	  
	   \uIf{$x = (r.entered + 1)$mod$\delta$}{
	       
	       \textsc{Broadcast}(``terminate'')
	       }
	   \uElseIf{$y = (r.entered + 1)$mod$\delta$}{
	           $r.direction \leftarrow$ \texttt{forward}\\
	           \textsc{Broadcast}(``terminate'')}
	   \Else{$r.direction \leftarrow$ \texttt{forward}}        
	  Move via port ($r.entered + 1$) mod $\delta$
	  }
	
	}

     \ElseIf{$r.received = \emptyset$}{
     \uIf{$r.direction =$ \texttt{forward}}{
     $r.direction \leftarrow$ \texttt{backward}\\
     
     }
     \ElseIf{$r.direction =$ \texttt{backward}}{
     $r.role =$ \texttt{done}
     }
     Move via port $r.entered$
     }
     } 
      \ElseIf{$r.role =$ \texttt{done}}{\textsc{Terminate}()}
 }

\caption{Dispersion}
    \label{algo:main} 
\end{algorithm}

\subsection{Detailed Description of the Algorithm}

In the starting configuration, all robots are present at the root node $v_R$. Initially, \emph{role} of each robot is \texttt{explore}. In the first stage, the robots have to perform a DFS traversal together as a group. This group of robots is called the \emph{exploring group}. Whenever the exploring group reaches an empty node (a node with no settled robot), one of the robots will settle at that node, i.e., it will change its \emph{role} to \texttt{settled} and remain at that node. For the rest of the algorithm, it does not move. However, it stays active and checks for any received messages. A settled robot can receive three types of messages: 

\begin{itemize}
 \item it may receive a query about the contents of its internal memory
 \item it may receive an to instruction to change the value of some variable
 \item it may be asked to terminate
\end{itemize}

When queried about its memory, it broadcasts a message containing its \emph{role}, \emph{parent}, \emph{child} and \emph{visited}. If it is asked to change the value of some variable or terminate, then it does so accordingly. Any robot with role  \texttt{explore}, \texttt{return} or \texttt{acknowledge}, in the first subround of any round, broadcasts a message querying about internal memory of any settled robot at the node. If it receives no message in the second subround, then it concludes that there is no settled robot at that node. Whenever the robots find that there is no settled robot at the node, during the first stage, they start the \textsc{LeaderElection()} subroutine to elect a leader. For any robot $r$, \textsc{LeaderElection()} results in one of the following outcomes:

\begin{itemize}
 \item it is elected as the leader
 \item it is not elected as the leader
 \item it finds that it is the only robot at that node
\end{itemize}

In the first case, it changes $r.role$ to \texttt{settled} and sets $r.parent$ equal to $r.entered$. Recall that $r.entered$ is the port through which it entered the current node and in the beginning, $r.entered$ is set to $\emptyset$. We shall call  $r.parent$ the \emph{parent port} of the node where $r$ resides. We shall refer to a robot that has set its $role$ to \texttt{settled} as a \emph{settled robot}. In the second case, it will continue the DFS: if $r.entered = \emptyset$, it leaves via port 0 and if $r.entered \neq \emptyset$, it leaves via port $(r.entered + 1)$ mod $\delta$. If $(r.entered + 1) = r.entered$ mod $\delta$, it changes its variable \emph{direction} to \texttt{backward} before exiting the node. Recall that the variable \emph{direction} is used to indicate the direction of the movement during a DFS traversal. In the third case, it changes $r.role$ to \texttt{return}. 

Now consider the case where the robots find that there is a settled robot at the node. If the \emph{direction} is set to \texttt{forward} when they encounter the settled robot, it indicates the onset of a cycle. So the robots change the \emph{direction} to \texttt{backward} and leave the node via the port through which they entered it. Now suppose that the \emph{direction} is set to \texttt{backward} when they encounter the settled robot. Recall that the robots have received from the settled robot, say $a$, a message which contains $a.parent$. The robots check if $a.parent$ is equal to the port number through which it entered, say $z$,  plus 1 (modulo the degree of the node). If yes, it implies that the robots have moved through all edges adjacent to the node, and hence they leave the node via $a.parent$ which is the port through which they  entered for the first time. If no, then it means that they have not moved through the port $(z + 1)$mod$\delta$ before. So they change the direction to \texttt{forward} and leave via $(z + 1)$mod$\delta$. 

The DFS traversal in the first stage ends when a robot, say $r_L$, with \emph{role} set to \texttt{explore}, finds that it is the only robot at a node, say $v_L$. Recall that when this happens, $r_L$ changes its \emph{role} to \texttt{return}. At this point, the first stage ends, and the second stage starts. It then leaves $v_L$ via the port through which it entered. In each of the following rounds where the \emph{role} of $r_L$ is \texttt{return}, it does the following. In the first subround, it broadcasts a query. In the next subround, it receives a message from the settled robot at that node which contains its \emph{parent}. If the obtained value of \emph{parent}, say $x$, is not $\emptyset$, it means that $r_L$ is yet to reach the root $v_R$. Then $r_L$ broadcasts an instruction for the settled robot to change the value of its \emph{child} to the port via which $r_L$ entered the node. This value of $child$ will be called the \emph{child port} of the node. After broadcasting the instruction, $r_L$ leaves through the port $x$. If $x = \emptyset$, then it means that $r_L$ has reached the root $v_R$. In this case, $r_L$ broadcasts the same instruction and then changes the values of $r_L.role$, $r_L.direction$ and $r_L.entered$ to respectively \texttt{acknowledge}, \texttt{forward} and $\emptyset$ . At this point the second stage ends, and the third stage starts.

In the following rounds, $r_L$ with \emph{role} \texttt{acknowledge} does the following.  In the first subround, it broadcasts a query. It either receives a reply or does not. If it receives a message, then it contains the values of \emph{parent}, say $x$, and \emph{child}, say $y$, and \emph{visited} of the settled robot at that node. Now, the value of variable \emph{visited} can be $0$ or $1$. If the value of \emph{visited} is $0$, it denotes that the settled robot is visited for the first time in the third stage. The robot $r_L$ then broadcasts a message instructing the settled robot to change the value of its variable $visited$ to $1$. Now $(r_L.entered + 1)$mod$\delta$ can be equal to $r_L.entered$ (the case of one degree node) or $y$ or neither of them. In the former case, it changes its variable $direction$ to \texttt{backward}. In the first two cases, it broadcasts a message instructing the settled robot to terminate and leaves through port   $(r_L.entered + 1)$mod$\delta$. If $(r_L.entered + 1)$mod$\delta$ is neither equal to $r_L.entered$, nor equal to $y$, $r_L$ just exits through $(r_L.entered + 1)$mod$\delta$ without broadcasting any message for termination. If the value of \emph{visited} is $1$, it denotes that the settled robot has been visited before in the third stage. If the value of variable $direction$ of $r_L$ is \texttt{forward}, it changes the value of $direction$ to \texttt{backward} and exits through the port through which it entered the node at the previous round. Otherwise the value of variable $direction$ of $r_L$ is \texttt{backward}. In this case, three sub-cases arise.  If $(r_L.entered + 1)$mod$\delta$ is equal to $x$, then $r_L$ broadcasts a message instructing the settled robot to terminate, and then $r_L$ exits through the port  $(r_L.entered + 1)$mod$\delta$. Otherwise if, $(r_L.entered + 1)$mod$\delta$ is equal to $y$, then also $r_L$ broadcasts a message instructing the settled robot to terminate, changes the variable $direction$ to \texttt{forward} and then $r_L$ exits through the port  $(r_L.entered + 1)$mod$\delta$. If   $(r_L.entered + 1)$mod$\delta$ is neither equal to $x$, nor $y$, then $r_L$ changes $direction$ to  \texttt{forward} and exits through  $(r_L.entered + 1)$mod$\delta$. Now, we consider the case where $r_L$ in third stage does not receive any answer to its query. If its $direction$ is set to \texttt{forward}, it changes its $direction$ to \texttt{backward} and then exits through the same port by which it entered the  node in the previous round. If its direction is set to \texttt{backward}, then it means that $r_L$ was at $v_L$ in the previous round. So $r_L$ changes its \emph{role} to \texttt{done} and leaves the node through the port via which it entered. Then it will reach $v_L$ in the next round and  it will find that its \emph{role} is \texttt{done} and terminate.

\section{Correctness Proof and Complexity Analysis}\label{sec: correct}

The first stage of our algorithm is the same as that of \cite{MollaM19}. The robots simply perform a DFS traversal. Whenever a new node is visited, one of the robots settle there. The DFS continues until $k$ distinct nodes are visited. To see that the DFS traversal can be correctly executed in our setting, it suffices to verify that the robots can correctly ascertain 1) if a node is previously visited and 2) if all neighbors of a node have been visited. For 1), observe that the presence of settled robot at a node indicates that the node has already been visited. So, when the robots with \emph{direction} \texttt{forward} go to a node which has a settled robot, it backtracks, i.e., it changes its $direction$ to \texttt{backward} and leaves the node via the port through which it had entered. For 2), observe that the port $p$ through which robots first enters a node $v$ is set as its parent port, i.e., the robot settled at $v$ sets its variable $parent$ to $p$. Then the robots will move through all other ports with \emph{direction} \texttt{forward} in the order $p+1, p+2, \ldots, \delta-1, 0, 1, \ldots , p-1$ (unless the DFS is stopped midway for $k$ distinct nodes have been visited). This is because if the robots leaves via a port $q$ (with \emph{direction} \texttt{forward}), it re-enters $v$ via the same port $q$ after some rounds (with \emph{direction} \texttt{backward}) and then leaves via $(q+1)$mod$\delta(v)$ (with \emph{direction} \texttt{forward}) in the next round if $(q+1)$mod$\delta(v) \neq p$. Clearly, when $(q+1)$mod$\delta(v) = p$, it indicates that the robots have moved through all ports other than $p$ with \emph{direction} \texttt{forward}, i.e., all neighbors of $v$ have been visited. The robots can check if $(q+1)$mod$\delta(v) = p$ because their variable \emph{entered} is equal to $q$ and the variable \emph{parent} of the robot settled at $v$ is equal to $p$. Inspecting the pseudocode of Algorithm \ref{algo:main}, it is easy to see that these are correctly implemented in the algorithm. Therefore, have the following result.

\begin{theorem}\label{thy 1}
There is a round $t_1$, at the beginning of which 
\begin{enumerate}
 \item each node of $G$ has at most one robot
 \item \emph{role} of exactly one robot $r_L$ is \texttt{explore} and the role of the remaining $k-1$ robots is \texttt{settled}
 \item if $V' \subseteq V$ is the set of nodes occupied by robots, then $G[V']$ (the subgraph of $G$ induced by $V'$) is connected
 \item if $E' \subseteq E$ is the set of edges corresponding to the variable \emph{parent} of robots in $\mathcal{R} \setminus \{r_L\}$ and variable \emph{entered} of $r_L$, then the graph $T = T(V',E')$ is a DFS spanning tree of $G[V']$,
 \item $r_L$ is at a leaf node $v_L$ of $T$.
\end{enumerate}

\end{theorem}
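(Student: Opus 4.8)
The plan is to show that the exploring group executes a standard depth-first search (DFS) of $G$ starting from the root $v_R$, and to track the moment at which this group shrinks to a single robot. The core observation, already argued informally before the statement, is that the settled robots supply exactly the local state a DFS needs at each node: the presence of a settled robot marks a node as visited, its \emph{parent} port records the edge along which the node was first entered, and the port arithmetic $(\mathit{entered}+1)\bmod\delta$ together with the \emph{direction} variable lets the group explore every incident port exactly once in the \texttt{forward} phase and retreat along the parent port once all neighbours are exhausted. I would first promote this informal argument to a precise invariant and then run an induction on rounds.

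Second, I would set up the invariant. For each round of the first stage let $V'$ be the set of currently occupied nodes and let $E'$ be the set of edges recorded by the \emph{parent} ports of the settled robots together with the \emph{entered} port of the exploring group. The invariant asserts: (i) each node carries at most one robot; (ii) all robots of \emph{role} \texttt{explore} are co-located at a single node $u$ and share an identical internal state, so that they move in lockstep and the group never splits except through \textsc{LeaderElection()}; (iii) $(V',E')$ is a tree that is a DFS tree of $G[V']$ rooted at $v_R$, and $u$ is the node most recently reached by the group; and (iv) the global configuration coincides with a snapshot of the standard recursive DFS of $G$ from $v_R$. The base case is the initial configuration, where all robots sit at $v_R$ and $V'=\{v_R\}$. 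For the inductive step I would run through the branches of Algorithm \ref{algo:main} for \emph{role} \texttt{explore}: when the current node is empty the group invokes \textsc{LeaderElection()}, one robot settles (setting \emph{parent} $=$ \emph{entered}) and the rest advance, matching a DFS ``discover new node'' step; when a settled robot is met with \emph{direction} \texttt{forward} the group backtracks, matching the rejection of a back edge; and when a settled robot is met with \emph{direction} \texttt{backward} the test of the returned \emph{parent} against $(\mathit{entered}+1)\bmod\delta$ decides whether to advance through the next unexplored port or to retreat along the parent, matching the ``return from a child'' and ``finish a node'' steps of DFS. In each branch the invariant is preserved, the degree-one and ``closed cycle'' corner cases being handled respectively by $(\mathit{entered}+1)\bmod\delta = \mathit{entered}$ and $(\mathit{entered}+1)\bmod\delta = \mathit{parent}$.

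Third comes the counting argument. By the invariant a robot settles precisely when the group reaches a node not previously occupied, and by the specification of \textsc{LeaderElection()} exactly one robot settles at each such event; hence the size of the exploring group drops by exactly one per newly discovered node. Starting from $k$ co-located robots, after $k-1$ discoveries exactly one robot remains, and since $G$ is connected with $k \le n$ the DFS is guaranteed to reach at least $k$ distinct nodes, so this single robot $r_L$ does reach a further empty node $v_L$. I would let $t_1$ be the beginning of the round in which $r_L$ first occupies $v_L$ with \emph{role} still \texttt{explore}. At $t_1$, condition 2 follows from the count ($k-1$ robots \texttt{settled}, one \texttt{explore}); conditions 1, 3 and 4 follow directly from the invariant; and condition 5 holds because $v_L$ was just discovered, so no settled robot has its \emph{parent} pointing at $v_L$ and the only edge of $E'$ incident to $v_L$ is the one recorded by the \emph{entered} port of $r_L$ towards its DFS parent, making $v_L$ a leaf of $T$.

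The main obstacle I anticipate is the faithful verification of the inductive step: establishing that the purely local decisions made by the group — which depend only on the current settled robot's \emph{parent} together with \emph{entered} and \emph{direction}, since the travelling robots carry no global map — reproduce, round by round, the order in which a textbook DFS visits ports and backtracks. In particular one must check that the group re-enters a node through the very port it left by, which is what legitimises the test $(\mathit{entered}+1)\bmod\delta$, and hence that every port of a node is traversed in the \texttt{forward} sense exactly once; one must also confirm that \textsc{LeaderElection()} terminates with high probability so that each discovery step is well defined. Once this local-to-global correspondence is nailed down, the five conclusions are immediate read-offs of the invariant at round $t_1$.
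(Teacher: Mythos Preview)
Your proposal is correct and follows essentially the same approach as the paper: the paper argues, in the paragraph immediately preceding the theorem, that the exploring group performs a genuine DFS because the settled robots let it detect (1) whether a node was previously visited and (2) whether all neighbours of a node have been exhausted, exactly the two local checks that your inductive invariant formalises. Your write-up is more carefully structured (explicit invariant, induction on rounds, counting argument for the moment the group shrinks to one) than the paper's informal discussion, but the underlying reasoning is the same.
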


In the following lemmas, we present some observations regarding the execution of DFS traversal in the first stage. 

\begin{lemma}\label{observations1}

     If $v$ is a non-rootpath node, then the exploring group leaves it via its parent port once. If $v$ is a rootpath node other than $v_L$, then the exploring group leaves it via its child port once.

\end{lemma}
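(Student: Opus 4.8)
The plan is to read off both claims from the structure of the DFS tree $T$ at the instant the first stage ends, using the standard DFS fact that the nodes currently on the recursion stack form a path from the root to the node being processed. First I would fix notation for the \emph{rootpath}, writing it as $v_R = u_\ell, u_{\ell-1}, \dots, u_1, u_0 = v_L$, where $u_{i+1}$ is the parent of $u_i$ in $T$; the child port of $u_i$ (for $i \ge 1$) is then the port of $u_i$ toward $u_{i-1}$, and its parent port corresponds to the edge $\{u_i,u_{i+1}\}$. The key preliminary step is to argue that at the round $t_1$ of Theorem \ref{thy 1} the exploring group has shrunk to the single robot $r_L$, which sits at the leaf $v_L$ and switches to role \texttt{return} \emph{instead of} continuing the traversal. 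Thus the DFS is frozen before $v_L$'s subtree is processed and before any backtracking past $v_L$ occurs, so the visited nodes split exactly into (i) those still on the DFS stack, namely the ancestors of $v_L$ in $T$, which is precisely the rootpath, and (ii) those whose subtree in $T$ has already been completely explored. The non-rootpath visited nodes are precisely the fully explored ones; this characterization is what everything else rests on.

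For the non-rootpath case, let $v \ne v_R$ be a non-rootpath node, let $w$ be its parent in $T$, and let $p$ be its parent port (the port of $v$ toward $w$); such a $w$ exists since only $v_R$ is rootless and $v_R$ lies on the rootpath. As recalled in the discussion preceding Theorem \ref{thy 1}, while at $v$ with direction \texttt{forward} the group leaves only through the ports $p+1, p+2, \dots, p-1$ (taken modulo $\delta(v)$), never through $p$; the edge $\{v,w\}$ is traversed in the direction $v \to w$ only on the single backtracking step that ends the exploration of $v$'s subtree. By the preliminary step that subtree is fully explored, so this backtracking step occurs, and once it does $v$ is finished and is never re-entered in a way that lets the group leave again through $p$: the parent $w$ will not re-traverse the tree edge $\{v,w\}$ (it already used the corresponding port to discover $v$), and every other edge incident to $v$ uses a port $\ne p$, so any later re-entry of $v$ is immediately followed by a departure through that other port. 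Hence the group leaves $v$ through its parent port exactly once.

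For the rootpath case, let $v = u_i$ with $i \ge 1$, so $v \ne v_L$, and let $c$ be its child port, the port toward $u_{i-1}$. Since $u_{i-1}$ is a child of $u_i$ in $T$, the tree edge $\{u_i,u_{i-1}\}$ is the discovery edge of $u_{i-1}$, so the first time the group leaves $v$ through $c$ is the \texttt{forward} step on which it discovers the empty node $u_{i-1}$ and descends into its subtree. Because $v_L = u_0$ lies in that subtree and the traversal is frozen at $v_L$, the group never returns to $v$ afterwards; in particular it never re-enters $v$ to use $c$ a second time. Therefore $c$ is used for leaving exactly once.

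The main obstacle I expect is making the ``exactly once'' conclusions airtight, in particular ruling out a second departure of the group from a node through its parent or child port. This reduces to two DFS invariants that must be justified from the port-advancing rule of Algorithm \ref{algo:main}: that a finished node, on any re-entry, is exited through the (non-parent) port by which it was re-entered rather than through its parent port, and that each tree edge is traversed at most once in each direction. Both follow from the rule that upon re-entering a node with direction \texttt{backward} through port $q$ the group advances to port $(q+1) \bmod \delta$ and so never reuses an already-processed port; I would isolate this as a short auxiliary observation and invoke it in the two cases above.
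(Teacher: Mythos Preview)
The paper states this lemma without proof, treating it as an immediate observation about the first-stage DFS. So there is no argument to compare against; I will simply assess your proposal on its own.

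Your overall plan is sound. Identifying the rootpath nodes with the DFS stack at round $t_1$ and the non-rootpath nodes with the subtrees that have already been fully processed is exactly the right decomposition, and your non-rootpath argument (the parent port is used only on the unique backtracking step, and any later forward re-entry through a port $q\neq p$ is undone by the immediate backtrack through $q$) goes through.

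There is, however, a slip in the rootpath case. You write that after the group leaves $u_i$ through its child port $c$ into the subtree of $u_{i-1}$, ``the group never returns to $v=u_i$ afterwards''. This is false in general: the subtree rooted at $u_{i-1}$ may contain non-tree edges back to $u_i$, so while exploring that subtree the group can re-enter $u_i$ with \emph{direction} \texttt{forward} through some port $q$, detect the settled robot, and backtrack through $q$. (Lemma~\ref{observations3} is stated precisely to cover such revisits.) Your conclusion still holds, because any such revisit enters and leaves through the same port $q\neq c$, but the intermediate claim as written is wrong. The clean fix is already in your final paragraph: invoke the invariant that each tree edge is traversed at most once in each direction. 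Since $\{u_i,u_{i-1}\}$ is a tree edge and the group traverses it in the $u_i\to u_{i-1}$ direction when discovering $u_{i-1}$, that is the unique time the group leaves $u_i$ through $c$; and since $u_{i-1}$'s subtree contains $v_L$ and is never fully explored before round $t_1$, the group never backtracks from $u_{i-1}$ to $u_i$ either. Replace the ``never returns'' sentence with this, and the argument is complete.
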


\begin{lemma}\label{observations2}

     Suppose that $v$ is a non-rootpath node and the exploring group leaves $v$ through its parent port at round $t$ with $direction$ \texttt{backward}. If the  exploring group returns to $v$ at some round $t', t < t' \leq t_1$, then its $direction$ must be \texttt{forward}.

\end{lemma}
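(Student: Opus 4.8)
The plan is to reduce the claim to the standard DFS fact that once a traversal backtracks out of a node, it never backtracks into it again. The real work is translating the \emph{direction} variable into a statement about the DFS tree $T$ of Theorem~\ref{thy 1}.

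First I would establish a characterization of backward moves: the exploring group arrives at a node $v$ with \emph{direction} \texttt{backward} if and only if the incoming move is a backtrack from a child of $v$ in $T$. Inspecting the pseudocode, a departure in the \texttt{backward} state occurs only in the branch where the group, having re-entered a node via some port $q$ while already \texttt{backward}, finds $(q+1) \bmod \delta$ equal to the parent port and therefore exits through the parent port; by definition the parent port leads to that node's parent in $T$. Consequently every \texttt{backward} arrival at $v$ is exactly a backtrack from some child $w$ of $v$.

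Next I would show that the hypothesized move---leaving $v$ via its parent port in the \texttt{backward} state at round $t$---is the unique \emph{final} backtrack of $v$ to its own parent, and that it can occur only after every incident port of $v$ has been scanned. This uses the port order $p+1, p+2, \ldots, p-1$ (mod $\delta$) from the discussion preceding Theorem~\ref{thy 1}: the parent port $p$ is taken only when the scan wraps around, i.e. after the group has exited via every other port in the \texttt{forward} state and re-entered via each of them in the \texttt{backward} state. By Lemma~\ref{observations1} this parent-port departure is the only one for the non-rootpath node $v$. In particular, for every child $w$ of $v$ both the forward step $v \to w$ and the return $w \to v$ have occurred strictly before round $t$; since each tree edge is traversed exactly once in each direction, the backtrack $w \to v$ happens once, and before $t$.

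Finally I would argue by contradiction. Suppose the group returns to $v$ at a round $t'$ with $t < t' \leq t_1$ in the \texttt{backward} state. By the first step this return is a backtrack from some child $w$ of $v$, but the second step places every such backtrack strictly before $t$, contradicting $t' > t$. Since \emph{direction} is two-valued, the return at $t'$ must instead be \texttt{forward}; concretely it is the traversal of a non-tree (cycle) edge into the already-settled $v$, after which the group immediately backtracks. The step needing the most care is the first one: one must inspect every branch of the pseudocode that writes \emph{direction} and check that none other than the parent-port exit produces a \texttt{backward} departure, so that the characterization of backward arrivals is genuinely exhaustive.
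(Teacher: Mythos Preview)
The paper states Lemma~\ref{observations2} without proof (it is listed among ``observations regarding the execution of DFS traversal''), so your argument would be the only one on record. Your overall plan is sound, but the characterization in your first step is incorrect. You assert that a departure in the \texttt{backward} state happens only in the branch where the group, already \texttt{backward}, exits through the parent port. In fact Algorithm~\ref{algo:main} has two further \texttt{backward} exits for robots with role \texttt{explore}: the degree-$1$ fresh node (after \textsc{LeaderElection} the non-leaders flip to \texttt{backward} and leave through the single port, which is indeed the parent port), and---crucially---the cycle-detection branch, where the group reaches an already-settled node $w$ with \emph{direction} \texttt{forward}, flips to \texttt{backward}, and leaves through $r.entered$. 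In this last case the exit port is \emph{not} $w$'s parent port, and the node the group returns to is \emph{not} $w$'s parent in $T$; it is whatever neighbour the preceding forward step came from, along a non-tree edge. Hence your ``if and only if'' (every \texttt{backward} arrival at $v$ is a backtrack from a child of $v$) is false, and the contradiction paragraph does not go through as written. You even flagged this inspection as the delicate step, but the inspection itself is incomplete.

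The repair is short. Take $t'>t$ minimal with the group at $v$ in state \texttt{backward}. If this arrival comes from the cycle-detection bounce at some settled $w$, then the group was at $v$ at round $t'-2$ and departed $v$ \emph{forward} toward $w$; but a forward departure from the already-settled node $v$ is possible only when the group has just arrived at $v$ with \emph{direction} \texttt{backward}, contradicting the minimality of $t'$ (and one checks $t'-2>t$, since at round $t$ the group leaves $v$ via the parent port, not toward $w$). So for the minimal $t'$ only the two parent-port branches remain, and those really are backtracks from a child $w$ of $v$; since $v$ is off the rootpath so is $w$, and Lemma~\ref{observations1} (or your second step) forces that unique parent-port departure of $w$ to lie before~$t$, giving the desired contradiction.
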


\begin{lemma}\label{observations3}

     Suppose that $v$ is a rootpath node and the exploring group leaves $v$ through its child port at round $t$ with $direction$ \texttt{forward}. If the  exploring group returns to  $v$ at some round $t', t < t' \leq t_1$, then its $direction$ must be \texttt{forward}.

\end{lemma}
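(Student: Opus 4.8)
The plan is to extract the meaning of the \emph{direction} variable from the traversal logic and then show that, after round $t$, the exploring group can re-enter $v$ only through a back edge, a move that is always performed with \texttt{direction} \texttt{forward}. Inspecting the \texttt{explore} branch of Algorithm~\ref{algo:main}, the group moves with \texttt{direction} \texttt{backward} in exactly two situations: (i) it \emph{backtracks up a tree edge}, having exhausted all ports of some node $u$ and returning to the parent of $u$; or (ii) it \emph{backs out of a back-edge probe}, having stepped from a node onto an already-occupied neighbour (arriving there with \texttt{direction} \texttt{forward}) and immediately retreating to the node it came from. Every other move is made with \texttt{direction} \texttt{forward}. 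A \texttt{backward} return to $v$ must therefore be of type (i) or (ii), and the proof amounts to excluding both for all rounds in $(t,t_1]$.

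First I would set up the structural facts. Let $c$ be the rootpath child of $v$, so the child port is $\mathrm{port}(v,c)$ and $vc$ is a tree edge of the spanning tree $T$ of Theorem~\ref{thy 1}. By Lemma~\ref{observations1} this port is traversed outward exactly once, namely at round $t$, which is the round at which $c$ is discovered. Since the rootpath is the tree path from $v_R$ to $v_L$, the node $v_L$ is a descendant of $c$ and hence lies in the subtree of $T$ rooted at $c$. As $v_L$ is the last node to be discovered and the first stage ends the instant it is reached (round $t_1$), this subtree is not fully explored at any round in $(t,t_1]$; equivalently the group never backtracks across $cv$ from $c$ to $v$ during $(t,t_1]$.

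To eliminate a type-(i) return, note that a backtrack into $v$ would require the group to finish the subtree of some tree-child $u$ of $v$. Using the cyclic port order $p+1,p+2,\dots\pmod{\delta(v)}$ (with $p$ the parent port of $v$) from the DFS-correctness discussion preceding Theorem~\ref{thy 1}, each port of $v$ is fully resolved before the next is probed. Hence every tree-child of $v$ reached through a port preceding the child port is explored and returned from strictly before round $t$, while the ports following the child port are not probed in $(t,t_1]$ at all, since reaching them would first require backtracking across $cv$, which we have excluded. The only child whose exploration straddles $(t,t_1]$ is $c$, whose subtree is never finished before $t_1$. Thus no type-(i) return occurs in $(t,t_1]$.

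For type-(ii) I would argue by minimality: let $t^{*}\in(t,t_1]$ be the least round at which the group sits at $v$ with \texttt{direction} \texttt{backward}. A type-(ii) arrival requires the group itself to have launched a \texttt{forward} probe out of $v$ two rounds earlier; but out of an already-settled node such a probe is issued only when the group is at $v$ with \texttt{direction} \texttt{backward} and advances to its next port — that is, only after a \texttt{backward} visit to $v$ strictly before $t^{*}$, contradicting minimality. The boundary cases are immediate: at round $t+1$ the group is at $c$, not $v$, and the round-$t$ probe is the tree-edge traversal to the \emph{empty} node $c$, which yields no back-out, so neither $t^{*}=t+1$ nor $t^{*}=t+2$ is possible. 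This settles type (ii), and therefore the group's direction on every return to $v$ in $(t,t_1]$ is \texttt{forward}. I expect this type-(ii) analysis to be the main obstacle: it is self-referential, and the clean way around it is precisely this minimal-round device, combined with the observation that once $v$ is settled the group never re-initiates exploration of $v$'s own ports.
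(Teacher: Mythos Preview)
Your argument is sound. The two-mode classification of \texttt{backward} arrivals (tree backtrack versus back-edge retreat) is exactly what the \texttt{explore} branch of Algorithm~\ref{algo:main} yields, the cyclic-port reasoning correctly confines all resolved subtrees of $v$ to rounds before $t$, and the minimality device for type~(ii) is airtight once the boundary rounds $t+1$, $t+2$ are handled, which you do.

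As for comparison: the paper does not prove Lemma~\ref{observations3} at all. It is stated, together with Lemmas~\ref{observations1} and~\ref{observations2}, as an ``observation regarding the execution of DFS traversal in the first stage'' and then invoked directly in the proof of Lemma~\ref{identical}. Your write-up therefore supplies a proof where the paper relies on the reader's familiarity with DFS invariants; the decomposition into tree-backtrack and back-edge-retreat arrivals, plus the observation that the subtree rooted at the rootpath child $c$ contains $v_L$ and is hence never closed before $t_1$, is precisely the content that the paper leaves implicit.
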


There is a unique path, i.e., the rootpath $v_R = v_1,v_2,\ldots,v_s = v_L$ in $T(V',E')$ from $v_R$ to $v_L$. Furthermore, for any consecutive vertices $v_i, v_{i+1}$ on the path 1) if $i+1 < s$, the variable \emph{parent} of the settled robot at $v_{i+1}$ is set to port$(v_{i+1},v_i)$ and 2) if $i+1 = s$, the variable \emph{entered} of robot $r_L$ at $v_{i+1}$ is set to port$(v_{i+1},v_i)$. So, according to our algorithm, $r_L$ will move along this path to reach $v_R$. For each node $v_i, i<s$, on the rootpath, when $r_L$ reaches $v_i$ along its way to $v_R$, it instructs the settled robot at $v_i$ to set its variable \emph{child} to port$(v_i,v_{i+1})$. Therefore, we have the following result.

\begin{theorem}\label{th 2}
There is a round $t_2$, at the beginning of which 
\begin{enumerate}
 \item $r_L$ is at $v_R$ with $r_L.role =$ \texttt{return}
 \item each node of $T(V',E') \setminus \{v_L\}$ has a settled robot
 \item if $v_R = v_1,v_2,\ldots,v_s = v_L$ is the rootpath and $r_{i}$ is the settled robot at $v_i, i<s$, then $r_i.child =$ port$(v_i,v_{i+1})$
 \item if $r$ is a settled robot on a non-rootpath node, then $r.child = \emptyset$.
\end{enumerate}

\end{theorem}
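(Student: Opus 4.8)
The plan is to establish all four conclusions by tracing the trajectory and the memory updates of $r_L$ throughout the second stage, using induction along the rootpath $v_R = v_1, v_2, \ldots, v_s = v_L$. The guiding invariant is: whenever $r_L$ begins a round of the second stage at a rootpath node $v_j$, it has $r_L.role = $ \texttt{return} and $r_L.entered = $ port$(v_j, v_{j+1})$, while the settled robots at $v_{j+1}, \ldots, v_{s-1}$ already have their \emph{child} set to the port leading to their rootpath successor and every non-rootpath settled robot still has \emph{child} $= \emptyset$. I would set up this invariant carefully and then read off points 1--4 from the final round of the stage.

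For the base case I would invoke Theorem \ref{thy 1}: at round $t_1$, $r_L$ is at the leaf $v_L = v_s$ and, upon finding itself alone, sets $r_L.role \leftarrow $ \texttt{return} and moves via $r_L.entered = $ port$(v_s, v_{s-1})$, entering $v_{s-1}$ through port$(v_{s-1}, v_s)$; this establishes the invariant at $v_{s-1}$. For the inductive step, assume $r_L$ begins a round at a rootpath node $v_j$ with $1 < j < s$ and the invariant holds. By the rootpath structure recalled just before the statement, the settled robot $r_j$ has $r_j.parent = $ port$(v_j, v_{j-1}) \neq \emptyset$, and since $v_j$ is visited for the first time in this stage, $r_j$ answers the query with $child = \emptyset$, $visited = 0$. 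Matching the corresponding branch of the \texttt{return} code, $r_L$ broadcasts ``Set child $r_L.entered$'', which sets $r_j.child = $ port$(v_j, v_{j+1})$, and then moves via port $x = r_j.parent$ to $v_{j-1}$, entering through port$(v_{j-1}, v_j)$; hence the invariant propagates to $v_{j-1}$. Iterating, $r_L$ visits $v_{s-1}, v_{s-2}, \ldots, v_1$ in order, each exactly once, setting each child pointer correctly.

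The terminal step defines $t_2$: when $r_L$ reaches $v_1 = v_R$, the settled robot there has $parent = \emptyset$ (it settled first in stage 1 with $entered = \emptyset$), so $r_L$ takes the $x = \emptyset$ branch---it broadcasts the final ``Set child'' instruction (setting $r_1.child = $ port$(v_1, v_2)$), switches to $role = $ \texttt{acknowledge}, sets $direction = $ \texttt{forward} and $entered = \emptyset$, and does not move. Taking $t_2$ to be the round at whose beginning $r_L$ sits at $v_R$ with $role = $ \texttt{return} gives point 1; the induction gives point 3; point 4 follows because $r_L$ follows only parent pointers in this stage, so it never visits---and hence never broadcasts ``Set child'' to---any non-rootpath node, leaving those \emph{child} variables at their initial value $\emptyset$; and point 2 follows from Theorem \ref{thy 1} together with the fact that settled robots never move and $r_L$ is the unique unsettled robot, so every node of $T \setminus \{v_L\}$ (including $v_R$) retains its settled robot while $v_L$ loses $r_L$.

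The main obstacle I anticipate is the subround-level bookkeeping at the boundary round $t_2$, since points 1 and 3 are in mild tension: at the very beginning of the round in which $r_L$ holds $role = $ \texttt{return} at $v_R$, the instruction setting $r_1.child$ has not yet propagated, whereas the round in which $r_1.child$ becomes port$(v_1,v_2)$ is the same round during which $r_L$ switches to \texttt{acknowledge}. I would resolve this by being explicit about the ``messages read in the next subround'' rule, verifying that within the single round $t_2$ the query--response--``Set child'' handshake completes so that $r_1.child$ is set (establishing point 3 for $i=1$) while $r_L$'s role is still recorded as \texttt{return} at the round's start. The remaining care points---confirming that each queried settled robot answers with exactly ``$role = settled$, $parent = x$, $child = \emptyset$, $visited = 0$'' on its first stage-two visit, and that locality of message passing prevents any non-rootpath robot from receiving a ``Set child'' instruction---are routine given the model and the pseudocode.
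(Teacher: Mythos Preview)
Your proof is correct and follows essentially the same approach as the paper: the paper's justification (given in the paragraph preceding the theorem rather than as a formal proof) simply observes that $r_L$ follows parent pointers from $v_L$ back to $v_R$, broadcasting ``Set child'' instructions along the way, which is exactly what your inductive invariant formalizes. Your observation about the subround-level tension between points~1 and~3 at $i=1$ is a genuine subtlety that the paper glosses over; your proposed resolution is sound and in fact more careful than the original.
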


From round $t_2 +1$, $r_L$ will start a second DFS traversal. This DFS traversal is trickier than the earlier one because the settled robots will one by one terminate during the process. Recall that the settled robots played important role in the first DFS. We shall prove that $r_L$ will correctly execute the second DFS traversal. In fact, we shall prove that the DFS traversal in the first stage is exactly same as the DFS traversal in the third stage in the sense that if the exploring group is at node $v$ at round $i < t_1$ (in the first stage), then $r_L$ is at node $v$ at round $t_2 + i$ (in the third stage).  

Let us first introduce a definition. In the following definition, whenever we say `at round', it is to be understood as `at the beginning of round'. Round $i$ in the first stage is said to be \emph{identical} to round $j$ in the third stage if the exploring group at round $i$ and $r_L$ at round $j$ are at the same node, say $u$ and one of the following holds:\\

\begin{description}

    \item[I1] At round $i$, there is no settled robot at $u$, the exploring group contains more than one robot and the variable $direction$ for each robot in the exploring group is set to \texttt{forward}. At round $j$ there is a settled robot at $u$ with its variable $visited$ set to $0$. The variables $direction$ and $entered$ of $r_L$ at round $j$  are equal to those of each robot in the exploring group at round $i$.

    \item[I2] At round $i$, there is a settled robot at $u$, the variable $direction$ for each robot in the exploring group is set to \texttt{forward} and the variable $entered$ for each robot in the exploring group is $\neq \emptyset$. At round $j$, either there is a terminated robot at $u$, or there is a settled robot at $u$ with its variable $visited$ set to $1$. The variables $direction$ and $entered$ of $r_L$ at round $j$  are equal to those of each robot in the exploring group at round $i$ .
    
    \item[I3] At round $i$, there is a settled robot at $u$, the variable $direction$ for each robot in the exploring group is set to \texttt{backward} and the variable $entered$ for each robot in the exploring group is $\neq \emptyset$. At round $j$, there is an active settled robot at $u$ with its variable $visited$ set to $1$. The variables $direction$ and $entered$ of $r_L$ at round $j$ are equal to those of each robot in the exploring group at round $i$  .
    
\end{description}

\begin{lemma}\label{identical}
Round $i$  is identical to $t_2 + i$ for all $1 \leq i < t_1$. 
\end{lemma}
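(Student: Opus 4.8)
The plan is to prove the statement by induction on $i$, taking as the inductive hypothesis that round $i$ is identical to round $t_2+i$, i.e.\ that one of \textbf{I1}, \textbf{I2}, \textbf{I3} holds. For the base case $i=1$ (with $k>1$; the case $k=1$ is vacuous since then $t_1=1$), the exploring group sits at $v_R$ with \emph{direction} \texttt{forward}, \emph{entered} $=\emptyset$ and more than one robot, and there is as yet no settled robot there. Meanwhile, by Theorem \ref{th 2}, $r_L$ is at $v_R$ at the start of round $t_2$ with \emph{role} \texttt{return}, and during round $t_2$ it executes the $x=\emptyset$ branch, which, \emph{without moving}, only resets its \emph{role} to \texttt{acknowledge}, its \emph{direction} to \texttt{forward} and its \emph{entered} to $\emptyset$; the settled robot it queried at $v_R$ still has \emph{visited} $=0$, since the \texttt{return} stage only sets \emph{child}. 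Hence at the start of round $t_2+1$ the configuration is exactly \textbf{I1}.

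For the inductive step I would assume round $i$ is identical to $t_2+i$ for some $i$ with $i+1<t_1$, read off from Algorithm \ref{algo:main} the single move made by the exploring group during round $i$ and the single move made by $r_L$ during round $t_2+i$, and verify that they leave through the same port (hence reach the same node $w$) and that the resulting states again satisfy one of \textbf{I1}--\textbf{I3}. This splits along the three cases of the hypothesis, each with subcases determined by whether $(\mathit{entered}+1)\bmod\delta$ equals \emph{entered} (a degree-one node), the parent port $x$, the child port $y$, or none of these. In \textbf{I1}, both the group (after a \textsc{LeaderElection()} that settles one robot) and $r_L$ (after setting \emph{visited} $=1$) leave $w$ through port $0$ if \emph{entered} $=\emptyset$ and through $(\mathit{entered}+1)\bmod\delta$ otherwise, and both switch to \texttt{backward} exactly when the node has degree one; in \textbf{I2} both switch \emph{direction} to \texttt{backward} and leave through \emph{entered}; and in \textbf{I3} both compare $(\mathit{entered}+1)\bmod\delta$ against the parent port $x$ (leaving through $x$ while staying \texttt{backward}) or else switch to \texttt{forward} and leave through $(\mathit{entered}+1)\bmod\delta$, the child-port subcase being the descent to the next rootpath node. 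Matching the \emph{destination's} type then invokes the first-stage analysis: a \texttt{forward} departure reaches a fresh node (giving \textbf{I1}, where for $i+1<t_1$ the arriving group still has more than one robot) or an already-visited node (giving \textbf{I2}), whereas a \texttt{backward} departure reaches a node still on the DFS stack (giving \textbf{I3}).

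The delicate point, and the main obstacle, is to certify that the robot $r_L$ queries in round $t_2+i$ has the \emph{correct termination status}, since the third-stage DFS terminates settled robots as it proceeds. The crux is \textbf{I3}: there $r_L$ must read the queried robot's \emph{parent} and \emph{child} in order to choose its next move, so that robot must still be \emph{active}; were it already terminated, $r_L$ would receive no reply and, being \texttt{backward}, would erroneously set its \emph{role} to \texttt{done}. I would settle this by noting that $r_L$ issues a ``terminate'' instruction to a node's settled robot only on its last departure from that node, namely via the parent port of a non-rootpath node or via the child port of a rootpath node (Lemma \ref{observations1}). By Lemmas \ref{observations2} and \ref{observations3}, every return to such a node after this final departure is in direction \texttt{forward}; hence a \texttt{backward} arrival (\textbf{I3}) necessarily precedes the final departure, so the queried robot is still active, as required. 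The same two lemmas, together with the fact that $v_L$ is the \emph{last} node discovered in the first stage---so that the descent from a rootpath node $v_j$ to $v_{j+1}$ is its final departure and any later arrival into $v_j$ is a \texttt{forward} back-edge traversal---also explain why a \texttt{forward} arrival into an already-visited node (\textbf{I2}) may find either a terminated robot or an active one with \emph{visited} $=1$, exactly matching the disjunction in the definition of \textbf{I2}.
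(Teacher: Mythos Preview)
Your proposal is correct and follows essentially the same approach as the paper's own proof: induction on $i$, the same three-way case split along \textbf{I1}/\textbf{I2}/\textbf{I3}, the same verification that the exploring group and $r_L$ leave through the same port, and the same appeal to Lemmas~\ref{observations2} and~\ref{observations3} to rule out a \texttt{backward} arrival finding a terminated robot. One small point to tighten: you phrase the inductive step as weak induction (``assume round $i$ is identical to $t_2+i$''), but your verification of the \emph{destination}'s status---whether a node is fresh for $r_L$ (for \textbf{I1}), already visited by $r_L$ (for \textbf{I2}), or still hosts an active settled robot (for \textbf{I3})---implicitly uses that the entire history of $r_L$ up to round $t_2+i$ matches that of the exploring group up to round $i$; the paper accordingly assumes the strong hypothesis ``round $j$ is identical to $t_2+j$ for all $1\le j\le i-1$'' and argues the \textbf{I3} case by contradiction (if the robot has terminated, then $r_L$ left via the parent/child port at some earlier round $t_2+l$, hence by the hypothesis the exploring group did so at round $l$, contradicting Lemma~\ref{observations2} or~\ref{observations3}).
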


\begin{proof}
We prove this by induction. At the beginning of round $1$, the exploring group (consisting of more than one robots) is at the root node $v_R$, there is no settled robot at $v_R$, and the variables $direction$ and $entered$ of each robot in the exploring group are set to \texttt{forward} and $\emptyset$ respectively. Note that when the robot $r_L$ enters the root node $v_R$ at round $t_2$, it sets its $direction$ and $entered$ to \texttt{forward} and $\emptyset$ respectively, and does not move (See line \ref{code: 1}-\ref{code: 2} in Algorithm \ref{algo:main}). Hence at the beginning of round $t_2 +1$, $r_L$ is at node $v_R$, with its $direction$ and $entered$ set to \texttt{forward} and $\emptyset$ respectively. Furthermore, there is a settled robot at $v_R$ at round $t_2 + 1$. This is the robot which was elected as the leader in round $1$ and had settled there. Since the variable $visited$ of this robot was initially set to $0$ and it has been not instructed to change it thus far, it is still set to $0$ at the beginning of $t_2 + 1$. These observations imply that I1 holds and round $1$  is identical to $t_2 + 1$.  

Now assume that round $j$ is identical to round $t_2 + j$ for all $1 \leq j \leq i-1, i<t_1$. We shall prove that round $i$ is identical to round $t_2 + i$. Let the robot $r_L$ be at node $u$ at round $t_2 +i-1$. Then it implies that the exploring group was at node $u$ at round $i-1$.  Now we can have following three cases.

    \textbf{Case 1.} Suppose that at round $i-1$, 1) there is no settled robot at $u$, 2) the exploring group contains more than one robot and 3) the variable $direction$ for each robot in the exploring group is set to \texttt{forward}. Suppose that their variable $entered$ are set to $p$. Then the robots will perform the \textsc{LeaderElection()} protocol and one of them will settle. The remaining robots will move via 0 if $p = \emptyset$ or otherwise, will move via $(p + 1)$ mod $\delta$. In the later case, the robots will change their $direction$ to \texttt{backward} iff $(p + 1)$ mod $\delta = p$. Since we assumed that round $i-1$ is identical to round $t_2 + i - 1$, at the beginning of round $t_2 + i - 1$, 1) $r_L$ is at $u$, 2) there is a settled robot at $u$ with its variable $visited$ set to $0$, and 3) the variables $direction$ and $entered$ of $r_L$ are equal to \texttt{forward} and $p$. According to our algorithm, $r_L$ will move via 0 if $p =\emptyset$ or otherwise, will move via $(p + 1)$mod$\delta$. In the later case, $r_L$ will change its $direction$ to \texttt{backward} iff $(p + 1)$mod$\delta = p \Leftrightarrow \delta(u) = 1$. Hence the exploring group at round $i$ and $r_L$ at $t_2 +i$ must be at the same node, say $v$, and have same values of $direction$ and $entered$. So now it remains to show that one of I1, I2 or I3 is true for round $i$ and $t_2+i$. For this, consider the following two cases.

    \textbf{Case 1a.} First consider the case where the exploring group and $r_L$ exit $u$ with $direction$ \texttt{forward} at round $i-1$ and $t_2 + i - 1$ respectively. At the beginning of round $i$, the exploring group has more than one robots as $i<t_1$. Now, at the beginning of round $i$, $v$ either has no settled robot or has a settled robot. In the first case, it implies that exploring group is entering $v$ for the first time at round $i$. The robots will then perform the \textsc{LeaderElection()} protocol and one of them, say $a$, will settle. By the induction hypothesis, it implies that $r_L$ will enter $v$ for the first time in the third stage at round $t_2 + i$. Hence the settled robot $a$ must be there with the value of $visited$ set to $0$. So I1  holds and hence round $i$ is identical to round $t_2 + i$. In the second case, it implies that the exploring group had entered $v$ at some previous round $j < i$. Then by the induction hypothesis, $r_L$ had visited $v$ at round $t_2 + j < t_2 + i$. Hence, if the robot $a$ is active, then variable $visited$ of $a$ is $1$, or otherwise $a$ has terminated. So I2  holds and hence round $i$ is identical to round $t_2 + i$.  
    
    
    \textbf{Case 1b.} Now consider the case where the exploring group and $r_L$ exit $u$ with $direction$ \texttt{backward} at round $i-1$ and $t_2 + i - 1$ respectively. This implies that $u$ is a one degree node. Also, the exploring group and $r_L$ were at $v$ at round $i-2$ and $t_2 + i - 2$ respectively. Therefore, there must be a settled robot, say $b$, present at $v$ , when the exploring group visits it at round $i$, as the node was visited earlier. So when $r_L$ enters $v$ at round $t_2+i$, if $b$ is still active, its $visited$ value is set to $1$ and I3 holds. We prove that this is the only case. For the sake of contradiction, let us assume that $b$ has terminated. Consider the following cases.
    
    \begin{itemize}
        \item First let $v$ be a non-rootpath node. Since $b$ has terminated, it implies that $r_L$ had exited $v$ via its parent port with $direction$ \texttt{backward} at some round $t_2 + l < t_2 + i-1$. Then by the induction hypothesis, the exploring group exited $v$ via its parent port with $direction$ \texttt{backward} at some round $l < i-1$. But then the fact  that the exploring group returns to $v$ with $direction$ \texttt{backward} at round $i$ contradicts Lemma \ref{observations2}.
        
        \item Now let $v$ be a rootpath node. Since $b$ has terminated, it implies that $r_L$ had exited $v$ via its child port with $direction$ \texttt{forward} at some round $t_2 + l < t_2 + i-1$. Then by the induction hypothesis, the exploring group exited $v$ via its child port with $direction$ \texttt{forward} at some round $l < i-1$.  But then the fact  that the exploring group returns to $v$ with $direction$ \texttt{backward} at round $i$ contradicts Lemma \ref{observations3}.
    \end{itemize}

     \textbf{Case 2.} At the beginning of round $i-1$, 1) there is a settled robot at $u$, 2) the variable $direction$ for each robot in the exploring group is set to \texttt{forward} and 3) the variable $entered$ for each robot in the exploring group is $p \neq \emptyset$. According to our algorithm, the exploring group will change their $direction$ to \texttt{backward} and leave the node via $p$. Since we assumed that round $i-1$ is identical to round $t_2 + i - 1$, at the beginning of round $t_2 + i - 1$, $r_L$ is at $u$, there is either an active settled robot at $u$ with its variable $visited$ set to $1$ or a terminated robot, and the variables $direction$ and $entered$ of $r_L$ are equal to \texttt{forward} and $p$ respectively. According to our algorithm, $r_L$ will change it $direction$ to \texttt{backward} and leave the node via $p$. Hence the exploring group at round $i$ and $r_L$ at $t_2 +i$ must be at the same node, say $v$, both having $direction$ set to \texttt{backward} and $entered$ set to port$(v,u)$. 
     
     It is clear that the exploring group and $r_L$ was at $v$ at rounds $i-2$ and $t_2 +i -2$ respectively. Hence, there is a settled robot at $v$, say $a$, at round $i$, as it had been visited at least once before. Also, $a$ is still situated at $v$ at round $t_2 +i$, either active or terminated. If it is active, then the value of its variable $visited$ is $1$, as $r_L$ had been at $v$ at round $t_2 +i-2$. So, in this case I3 holds. Using the same arguments as in Case 1b, we can show that this is the only possible case.

  \textbf{Case 3.} At the beginning of round $i-1$, 1) there is a settled robot at $u$, say $a$, 2) the variable $direction$ for each robot in the exploring group is set to \texttt{backward} and 3) the variable $entered$ for each robot in the exploring group is $p \neq \emptyset$. Let the $parent$ of $a$ be equal to $x$. According to our algorithm, the robots will not change their $direction$ if $(p+1)$mod$\delta = x$ (Case 3a), and otherwise, it will change it to \texttt{forward} (Case 3b). In any case, they will leave the node via port $(p+1)$mod$\delta$. Since we assumed that round $i-1$ is identical to round $t_2 + i - 1$, at the beginning of round $t_2 + i - 1$, 1) $r_L$ is at $u$, 2) there is a settled robot at $u$ with its variable $visited$ set to $1$, and 3) the variables $direction$ and $entered$ of $r_L$ are equal to \texttt{backward} and $p$ respectively. Since a settled robot does not move or change its $parent$, the settled robot at $u$ at round $t_2 + i - 1$ is $a$ and its $parent$ is set to $x$. According to our algorithm, $r_L$ will not change its $direction$ if $(p+1)$mod$\delta = x$ (Case 3a), and otherwise, it will change it to \texttt{forward} (Case 3b). In any case, they will leave the node via port $(p+1)$mod$\delta$. Hence at round $i$ and $t_2 +i$, the exploring group and $r_L$ must be at the same node, say $v$, with same $direction$ and $entered$. So now it remains to show that one of I1, I2 or I3 is true for round $i$ and $t_2+i$.

    \textbf{Case 3a.} Clearly in this case $v$ must have been visited at least once before round $i$ in the first stage. Hence a robot, say $c$, is already settled there at round $i$. Hence, at round $t_2 +i$,  either $c$ is active with $visited$ set to $1$, or  is terminated. In the first case, I3 holds and hence we are done. We can prove that the later case is impossible using the same arguments as in Case 1b.

       \textbf{Case 3b.} If there is no settled robot at $v$ at round $i$, then the exploring group is visiting $v$ for the first time and one of them, say $b$, will settle there. It implies from our induction hypothesis that $r_L$ is also visiting $v$ for the first time in the third stage at round $t_2+i$. Hence it will find $b$  with its $visited$ set to $0$. So I1 holds. If there is a settled robot at $v$ at round $i$, say $c$, then at round $t_2 + i$, $v$ has $c$ which is either terminated or its variable $visited$ set to $1$. So we see that I2 holds.
       \end{proof}

\begin{theorem}\label{term1}
By round $t_2 + t_1$ all the settled robots have terminated.
\end{theorem}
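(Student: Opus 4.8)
The plan is to lean on Lemma~\ref{identical}: because round $i$ of the first stage is identical to round $t_2+i$ of the third stage for every $1\le i<t_1$, the robot $r_L$ retraces the first-stage DFS node for node during rounds $t_2+1,\dots,t_2+t_1-1$. Each of the $k-1$ settled robots occupies a node of $T(V',E')\setminus\{v_L\}$, so it is enough to exhibit, for each such node, one round strictly below $t_2+t_1$ at which $r_L$ issues \textsc{Broadcast}(``terminate'') to the robot on it; since a settled robot terminates in the same round it reads such a message, this places its termination no later than the start of round $t_2+t_1$. I would split the nodes into the two classes of Lemma~\ref{observations1} and, for each, locate the terminate broadcast at the unique ``last departure'' of the first-stage traversal.

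For a non-rootpath node $v$, Lemma~\ref{observations1} supplies a unique round $i<t_1$ at which the exploring group leaves $v$ through its parent port, and the first-stage rule forces $direction=$ \texttt{backward} on that departure (the configuration of Lemma~\ref{observations2}). Thus round $i$ is of type \textbf{I3}, so at round $t_2+i$ the settled robot on $v$ is still active with $visited = 1$, $r_L$ has $direction=$ \texttt{backward}, and its next port $(entered+1)\bmod\delta$ is the parent port $x$; the \texttt{acknowledge} branch with $visited=1$ and backward direction then broadcasts ``terminate''. Degree-one leaves form the boundary sub-case: they are abandoned via their only port already on the first visit, a type-\textbf{I1} round, where the one-degree test $entered=(entered+1)\bmod\delta$ fires the instruction. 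Either way the broadcast occurs at round $t_2+i<t_2+t_1$.

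For a rootpath node $v_i$ with $i<s$, Lemma~\ref{observations1} gives a unique round $c_i<t_1$ at which the group departs through its child port, which by Theorem~\ref{th 2} was recorded as $y$ during the second stage; note $c_{s-1}=t_1-1$ is the latest such round. At round $t_2+c_i$ the next port $(entered+1)\bmod\delta$ equals $y$, and whether $v_i$ is first seen (type \textbf{I1}, $visited=0$) or revisited backward (type \textbf{I3}, $visited=1$), the matching \texttt{acknowledge} sub-case broadcasts ``terminate''. Combining the classes, every one of the $k-1$ settled robots is told to terminate at a round below $t_2+t_1$, which yields the theorem.

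Two points carry the weight. First, the appeal to types \textbf{I2}/\textbf{I3} requires the settled robot to be \emph{active} when $r_L$ revisits — that is, that no node is killed and then re-entered before $t_2+t_1$ — which is exactly what the inductive invariant of Lemma~\ref{identical} secures, so the whole argument must stay within that lemma's envelope rather than reason about $r_L$'s behaviour after round $t_1$. Second, and this is the genuine obstacle, the root $v_R$ is the one rootpath node whose child-port departure may coincide with its very first visit, at which $r_L.entered=\emptyset$; there the \texttt{acknowledge} branch takes the $entered=\emptyset$ clause, leaves via port $0$, and sends no terminate. Settling the theorem at $v_R$ therefore demands a dedicated argument — either proving $v_R$ is necessarily re-entered with $direction=$ \texttt{backward} so that the $visited=1$ sub-case supplies the missing terminate, or reading the $entered=\emptyset$ departure itself as the root's terminate trigger — and nailing down this corner case is where I expect the real care to be needed.
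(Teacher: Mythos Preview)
Your approach is precisely the paper's: invoke Lemma~\ref{observations1} to pin down, for each settled node, the unique first-stage round at which the exploring group exits through the distinguished port (parent for non-rootpath, child for rootpath), then use Lemma~\ref{identical} to transport that round to stage three and read the \texttt{terminate} broadcast off the \texttt{acknowledge} code. The paper's own proof is a two-line sketch doing exactly this, without your case analysis.

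The root-node obstacle you flag is real, and the paper does not resolve it. When $\mathrm{port}(v_R,v_2)=0$ --- which happens whenever $v_L$ lies in the DFS subtree reached from $v_R$ through port~$0$, e.g.\ already for $k=2$ or for a path rooted at an endpoint --- the exploring group leaves $v_R$ through its child port only once, at round~$1$, with $entered=\emptyset$. At the identical round $t_2+1$ the \texttt{acknowledge} code with $visited=0$ takes the $r.entered=\emptyset$ clause, which moves through port~$0$ but never tests against $y$ and never broadcasts ``terminate''; and since the group never re-enters $v_R$ with direction \texttt{backward} in stage one, Lemma~\ref{identical} forbids any later type-\textbf{I3} visit. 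The settled robot at $v_R$ therefore survives past round $t_2+t_1$, so the theorem as stated fails in this case. This is not a gap you can close by sharper argument alone: either the $entered=\emptyset$ clause of Algorithm~\ref{algo:main} must be amended to broadcast ``terminate'' when $y=0$, or the bound in the theorem must be loosened. The paper's proof simply omits this case; your instinct that it is the crux is correct.
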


\begin{proof}
 A settled robot at a non-rootpath node will terminate if $r_L$ moves from that node to its parent and a settled robot at a rootpath node will terminate if $r_L$ moves from that node to its child. Recall that if $v$ is a non-rootpath node, then in Phase 1 the exploring group leaves it via its parent port once. Also, if $v$ is a rootpath node other than $v_L$, then in Phase 1 the exploring group leaves it via its child port once. So the result follows from Lemma \ref{identical}. 
 \end{proof}

\begin{theorem}
At round $t_2 + t_1 + 2$, $r_L$ terminates at $v_L$.
\end{theorem}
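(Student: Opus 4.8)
The plan is to track $r_L$ through the final rounds of the third stage, using Lemma \ref{identical} to fix its position and \emph{direction} at round $t_2+t_1$, and Theorem \ref{term1} to guarantee that the nodes it revisits carry no active robot. Assume $k \ge 2$ (the case $k=1$ is degenerate); then $v_R$ carries a settled robot, so $v_L \ne v_R$ and the rootpath $v_R = v_1,\dots,v_s = v_L$ has $s \ge 2$. Write $w = v_{s-1}$ for the parent of the leaf $v_L$ in $T$.

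First I would determine the state of $r_L$ at the beginning of round $t_2+t_1$. In the first stage the exploring group reaches $v_L$ for the first time at round $t_1$, necessarily by a \texttt{forward} move out of $w$ performed during round $t_1-1$: whether the group settles its last robot at $w$ and $r_L$ walks on to $v_L$, or $r_L$ is already alone and steps forward to the unvisited neighbour $v_L$ while backtracking through $w$, the entry into a fresh node is a \texttt{forward} move. By Lemma \ref{identical}, round $t_1-1$ is identical to round $t_2+t_1-1$, so $r_L$ sits at $w$ at round $t_2+t_1-1$ with the matching values of \emph{entered} and \emph{direction}; running the \texttt{acknowledge} branch of Algorithm \ref{algo:main} one further step makes $r_L$ perform the analogous move, so it arrives at $v_L$ at round $t_2+t_1$ with \emph{direction} \texttt{forward}. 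Since $w$ lies on the rootpath, the port $w \to v_L$ is exactly $w$'s \emph{child} port, so $r_L$ broadcasts ``terminate'' to the robot at $w$ as it leaves, in agreement with Theorem \ref{term1}, which already tells us that every settled robot, including the one at $w$, has terminated by round $t_2+t_1$.

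Given this state, the remainder is a two-round reading of the pseudocode. At round $t_2+t_1$ the robot $r_L$ queries at $v_L$; since $v_L$ never held a settled robot, $received = \emptyset$, and with \emph{direction} \texttt{forward} the algorithm flips \emph{direction} to \texttt{backward} and moves $r_L$ back through \emph{entered} to $w$. At round $t_2+t_1+1$, $r_L$ queries at $w$; by Theorem \ref{term1} the former occupant of $w$ has terminated, so again $received = \emptyset$, and now with \emph{direction} \texttt{backward} the algorithm sets $r_L.role \leftarrow \texttt{done}$ and moves $r_L$ through \emph{entered} back to $v_L$. Hence at round $t_2+t_1+2$ the robot $r_L$ is at $v_L$ with \emph{role} \texttt{done}, and the \texttt{done} branch of Algorithm \ref{algo:main} immediately calls \textsc{Terminate()}, which is the claim.

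The step I expect to be the main obstacle is the first one: the identical relation of Lemma \ref{identical} is asserted only for $1 \le i < t_1$, so I must push its conclusion one step past the stated range to locate $r_L$ at round $t_2+t_1$, and in doing so handle uniformly the two ways the first-stage DFS can enter the leaf $v_L$. Once the \texttt{forward} entry into $v_L$ at round $t_2+t_1$ and the emptiness of both $v_L$ and $w$ are established, the last two rounds follow mechanically from the \texttt{acknowledge} and \texttt{done} branches.
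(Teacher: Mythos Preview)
Your proposal is correct and follows essentially the same approach as the paper: apply Lemma~\ref{identical} at its last valid index $i=t_1-1$ to place $r_L$ at $v_{s-1}$, argue one step by hand that $r_L$ then enters $v_L$ with \emph{direction} \texttt{forward}, and read off the final two rounds from the \texttt{acknowledge}/\texttt{done} branches using Theorem~\ref{term1}. Your write-up is in fact more careful than the paper's---you explicitly flag the boundary issue with Lemma~\ref{identical} and distinguish the two ways the first-stage DFS can leave $w$ for $v_L$, whereas the paper dismisses that transition with ``it is easy to see''---but the skeleton of the argument is identical.
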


\begin{proof}
It follows from Lemma \ref{identical} that $r_L$ will be at $v_{s-1}$ at round $t_2 + t_1-1$. It is easy to see that it will then move to $v_s = v_L$ with $direction$ \texttt{forward}. So at round $t_2 + t_1$, $r_L$ is at $v_L$ with $direction$ set to \texttt{forward}. Since $v_L$ has no settled robot, $r_L$ will change its $direction$ to  \texttt{backward} and exit through the port through which it entered $v_L$. But moving through this port leads to $v_{s-1}$. The settled robot at this node is already terminated by Lemma \ref{term1}. Hence at round $t_2 + t_1 +1$, $r_L$ enters with $direction$ \texttt{backward} a node where it does not receive any message. So it will then change its $role$ to \texttt{done} and exit the port through which it entered the node. Therefore, at round $t_2 + t_1 +2$, $r_l$ again returns to $v_L$, this time with $role$ \texttt{done}, and hence will terminate. 
\end{proof}

\begin{theorem}
Algorithm \ref{algo:main} is correct.
\end{theorem}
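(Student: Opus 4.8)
The statement to prove is that Algorithm~\ref{algo:main} solves \textsc{Dispersion}, so by the problem definition I must verify exactly two things: (i) in the terminal configuration every node holds at most one robot, and (ii) every robot terminates after finitely many rounds. The bulk of the work has already been carried out in the preceding lemmas and theorems, so the plan is to assemble condition (i) and condition (ii) from those results rather than to introduce new machinery. I would open by stating the two conditions explicitly and then treat them in turn.

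For condition (i), I would invoke Theorem~\ref{thy 1}: at the beginning of round $t_1$ each node of $G$ carries at most one robot, and since there are exactly $k$ robots ($k-1$ with role \texttt{settled} on the non-$v_L$ nodes of the DFS tree $T(V',E')$, together with $r_L$ alone on the leaf $v_L$), dispersion is already achieved at round $t_1$. The point to make next is that this property is \emph{preserved} all the way to termination: by the design of the algorithm a robot that has become \texttt{settled} never moves again, so the $k-1$ settled robots stay on their pairwise-distinct nodes, none of which is $v_L$. The only robot that relocates after round $t_1$ is $r_L$, and by the theorem immediately preceding the final statement it terminates precisely at $v_L$ (at round $t_2+t_1+2$). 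Hence at termination the $k$ robots still occupy the same $k$ distinct nodes, giving condition (i).

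For condition (ii) I would chain the termination results in sequence. Theorem~\ref{th 2} shows the second stage completes, so that the round $t_2$ is well defined and $r_L$ returns to the root with the child pointers correctly installed; Lemma~\ref{identical} couples the third-stage walk of $r_L$ to the first-stage walk of the exploring group; Theorem~\ref{term1} then shows that every settled robot has terminated by round $t_2+t_1$; and the final preceding theorem shows that $r_L$ itself terminates at $v_L$ at round $t_2+t_1+2$. Together these give termination of all $k$ robots within finitely many rounds.

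The substantive content of the whole development lives in Lemma~\ref{identical} and Theorem~\ref{term1}; by contrast the remaining obstacle here is light, and I expect it to be essentially the following bookkeeping observation. During the second and third stages $r_L$ does repeatedly share a node with a settled robot as it passes through, but this is only a \emph{transient} overlap; what must be checked is that no collision survives into the terminal configuration, and this is immediate because $r_L$ is the sole moving robot after round $t_1$ and it comes to rest at $v_L$, a node that never held a settled robot. I would also note explicitly the single point at which randomness enters: the rounds $t_1$ and $t_2$ are guaranteed to exist because every invocation of \textsc{LeaderElection()} terminates, which happens with probability $1$ (and within $O(\log k)$ subrounds with high probability); conditioned on this the deterministic analysis above yields correctness, and this is the mild caveat under which the theorem should be read.
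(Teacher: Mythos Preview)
Your proposal is correct and follows essentially the same approach as the paper, which simply states that from the preceding results dispersion is accomplished and all robots have terminated by round $t_2+t_1+2$. You unpack this one-line argument into its constituent pieces (invoking Theorem~\ref{thy 1} for the dispersed configuration, Theorem~\ref{term1} and the termination theorem for $r_L$ for finiteness) and add two clarifying remarks---that transient overlaps during stages two and three do not survive to termination, and that the finiteness of $t_1,t_2$ relies on \textsc{LeaderElection()} terminating---both of which are sound elaborations the paper leaves implicit.
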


\begin{proof}
It follows from the above results that at round $t_2 + t_1 + 2$, dispersion accomplished and all robots have terminated. Hence Algorithm \ref{algo:main} solves \textsc{Dispersion} from any rooted configuration.  
\end{proof}

\begin{theorem}\label{}
Algorithm \ref{algo:main} requires $O(\log \Delta)$ bits of memory at each robot and this is optimal in terms of memory complexity. 
\end{theorem}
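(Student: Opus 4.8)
The plan is to establish the two halves of the statement separately: that Algorithm~\ref{algo:main} never uses more than $O(\log\Delta)$ bits of memory at any robot, and that this is asymptotically best possible.

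For the upper bound I would carry out a direct accounting of the space used by each robot, organized around Table~\ref{table}. The variables \emph{role}, \emph{direction} and \emph{visited} each range over a set of constant size (five role values, two directions, one bit), and so cost only $O(1)$ bits apiece. The variables \emph{entered}, \emph{parent} and \emph{child} each store a single port number drawn from $[0,\delta-1]\subseteq[0,\Delta-1]$, and hence cost $O(\log\Delta)$ bits apiece. Summing over the variables gives $O(\log\Delta)$ bits of persistent storage. I would then account for communication and local computation. By the model, every message has size $O(\log\Delta)$; inspecting the broadcasts in Algorithm~\ref{algo:main} confirms this, since each is built from a constant number of flag fields (\emph{role}, \emph{visited}, fixed strings such as ``terminate'') together with at most two port numbers. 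Hence the variable \emph{received} also fits in $O(\log\Delta)$ bits. The only non-trivial computation a robot performs is evaluating $(entered+1)\bmod\delta$ and comparing the result against stored port numbers; this is arithmetic on integers bounded by $\Delta$ and can be done in $O(\log\Delta)$ workspace. Finally, the \textsc{LeaderElection()} subroutine was already noted to use only $O(1)$ bits. Together these give the $O(\log\Delta)$ upper bound.

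For optimality I would invoke the lower bound of \cite{MollaM19}, which shows that any randomized algorithm solving \textsc{Dispersion} from a rooted configuration with anonymous robots requires $\Omega(\log\Delta)$ bits per robot; since the two bounds match asymptotically, Algorithm~\ref{algo:main} is memory optimal.

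The argument is fundamentally a bookkeeping exercise, so I do not anticipate a deep obstacle; the one point requiring genuine care is verifying that no quantity manipulated anywhere in the algorithm is indexed by $D$, $n$, or $k$. This is exactly where the earlier $O(\max\{\log\Delta,\log D\})$ algorithm of \cite{MollaM19} exceeded the target: it had to remember information scaling with the depth of the search. Our algorithm sidesteps this because $r_L$ never stores the rootpath or the DFS tree explicitly---it navigates using only its own current \emph{entered} port together with the \emph{parent} and \emph{child} pointers held locally by the settled robots it encounters. Confirming that every stored or transmitted value is therefore either a bounded flag or a port number in $[0,\Delta-1]$ is precisely what pins the bound to $O(\log\Delta)$.
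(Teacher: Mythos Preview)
Your proposal is correct and follows essentially the same approach as the paper: a direct accounting over the variables in Table~\ref{table} (constant-size flags versus $O(\log\Delta)$-bit port numbers), the $O(1)$ cost of \textsc{LeaderElection()}, and the $\Omega(\log\Delta)$ lower bound from \cite{MollaM19} for optimality. If anything, your write-up is more thorough than the paper's, since you also explicitly bound the message sizes, the arithmetic workspace, and articulate why no quantity scaling with $D$, $n$, or $k$ ever appears.
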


\begin{proof}
The \textsc{LeaderElection}() subroutine costs $O(1)$ bits of memory for each robot. Among the variables, $role$, $visited$ and $direction$ costs $O(1)$ bits of memory, and the variables $entered$, $parent$, $child$, $received$ costs $O(\log \Delta)$ bits of memory for each robot. Hence the algorithm requires $O(\log \Delta)$ bits of memory at each robot. The optimality follows from the lower bound result proved in \cite{MollaM19}.
\end{proof}

\begin{theorem}\label{}
Algorithm \ref{algo:main} requires $\Theta(k^2)$ rounds in the worst case (assuming that \textsc{LeaderElection()} terminates each time).  
\end{theorem}

\begin{proof}
 Exactly $k$ distinct nodes are visited in our algorithm. In the first stage, movement of the exploring group takes $O(m')$ rounds where $m'$ is the number of edges in the subgraph of $G$ induced by these $k$ vertices. Clearly $m'= O(k^2)$. So the first stage requires $O(k^2)$ rounds. The third stage requires $(k^2)$ rounds as  well since apart from the last two rounds, it is exactly identical to the first stage. Clearly the second round takes $O(k)$ rounds. So the overall round complexity is $O(k^2)$.
 
 To see that our analysis is tight, we show an instance where $\Omega(k^2)$ rounds will be required. Consider the graph of size $k$ in Fig. \ref{fig: time}. Here port$(v_R,v_1) = 0$, port$(v_1,v_R) = 0$, port$(v_1,v_2) = 1$, port$(v_1,v_L) = 2$. Here, after reaching $v_1$, the exploring group will go $v_2$. It is easy to see that $\Theta(k^2)$ rounds will be spent inside the $(k-3)-$clique. Finally the last robot will return to $v_1$ and then move to $v_L$.    
 \end{proof}

\begin{figure}[htb!]
\centering
\fontsize{10pt}{10pt}\selectfont
\def\svgwidth{0.55\textwidth}
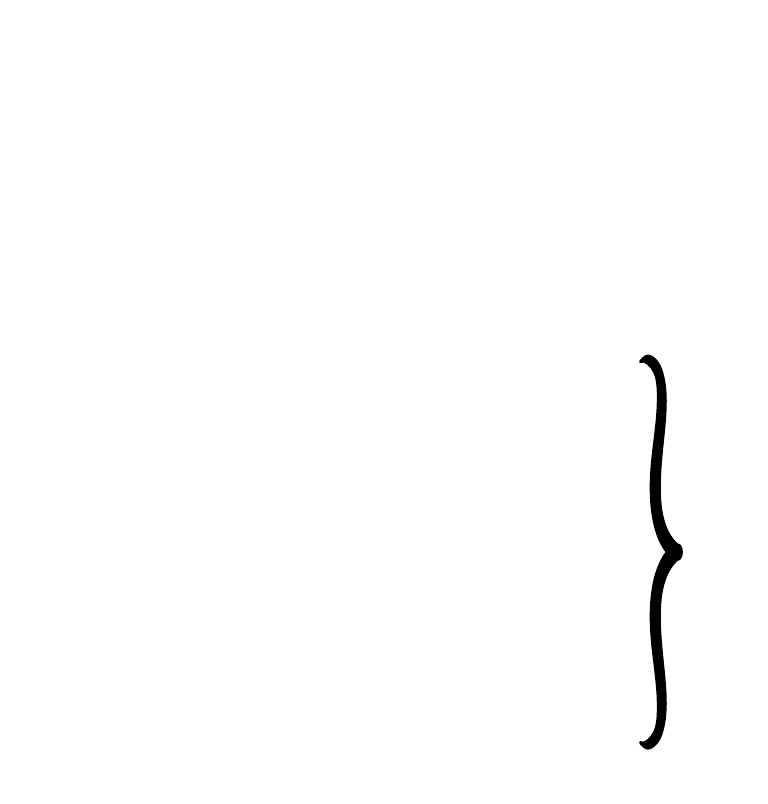
\caption{An example where Algorithm \ref{algo:main} requires $\Theta(k^2)$ rounds to complete.}
\label{fig: time}
\end{figure}

\section{Concluding Remarks}\label{sec: conclu}

We have presented a memory optimal randomized algorithm for \textsc{Dispersion} from rooted configuration by anonymous robots. This resolves an open problem posed in \cite{MollaM19}. Time complexity of our algorithm is $\Theta(k^2)$ rounds in the worst case (assuming that \textsc{LeaderElection()} terminates each time). Any algorithm that solves the problem requires $\Omega(k)$ rounds in the worst case.
To see this, consider a path with $n \geq k$ nodes with all robots initially at one of its one degree nodes. An interesting open problem is to close this gap.

For arbitrary configuration, the random walk based algorithm presented in \cite{MollaM19} requires the robots to stay active indefinitely. Therefore an interesting open question is whether it is possible to solve the problem by anonymous robots from non-rooted configurations without requiring robots to stay active indefinitely.

\paragraph{Acknowledgement.}
We would like to thank Pritam Goswami for valuable discussions.
The first two authors are supported by UGC, Govt. of India, and NBHM DAE, Govt. of India respectively. We would like to thank the anonymous reviewers for their valuable
comments which helped us to improve the quality and presentation of the paper.

\bibliographystyle{plainurl}
\bibliography{random_dispersion}

\end{document}